\newtheorem{theorem}{Theorem}[section]
\newtheorem{lemma}[theorem]{Lemma}
\date{}
\def\spacingset#1{\renewcommand{\baselinestretch}%
{#1}\small\normalsize} \spacingset{1}
  \title{\bf Estimation of Expected Shortfall under Various Experimental Conditions}
  \author{Jana Jure\v{c}kov\'a\thanks{The research was supported by the Grant  21-19311S of the Czech Science Foundation. The research of
	J. Jure\v{c}kov\'a was also supported by the Grant 22-036036S}\hspace{.2cm}\\
		The Czech Academy of Sciences, \\Institute of Information Theory
and Automation\\ and Charles University, Faculty of Mathematics and Physics, Prague\\
    and \\
    Jan Kalina \\
		The Czech Academy of Sciences, Institute of Computer Science\\
		and \\
		Jan Ve\v{c}e\v{r} \\
		Charles University, Faculty of Mathematics and Physics, Prague}	
\begin{document}			
  \maketitle
 
\date{}

\bigskip
\begin{abstract}
Our primary aim is to find an estimate of the  expected shortfall in various situations: 
(1) Nonparametric situation, when the probability distribution of the incurred loss is unknown, only satisfying some general conditions. Then, following  \cite{Bassett}, the expected shortfall can be expressed through a minimization of a~well known quantile criterion and its numerical estimate is based on the empirical quantile function of the loss. 
(2) The distribution function of the loss is known, but the loss can be contaminated by an additive measurement error: Estimating the expected shortfall in such a case exploits the concept of pseudo-capacities elaborated in \cite{HuberStrassen} and \cite{Buja1986} and its numerical value is based on the empirical quantile function of the suitable capacity.	
(3) The loss distribution can be contaminated by the heavy right tail with Pareto index $\gamma>1$. The problem of interest is in this case to evaluate the effect of the Pareto index on the resulting  expected shortfall.
\end{abstract}

\noindent%
{\it Keywords:} expected shortfall; Choquet capacity; distortion function; additive measurement error; heavy tail distribution 
\vfill

\spacingset{1.8} 

\section{Introduction} 
\label{sec:intro}

(Financial) decision making often struggles with various uncertainties; the accepted  conclusions can suffer from a risk. We would like to  predict our risk of a loss before making a~decision. There is a rich literature on this subject;  several authors developed various measures of risk, with a profound mathematical background. In the financial sector one usually considers the portfolio risk, which can be taken  as a negative utility. We prefer to follow \cite{Wirch}, who argued that the risk measure should reflect the solvency in case of a unfavorable experience, thus  the loss distribution should be censored at zero. \cite{Artzner} introduced the class of coherent risk measures possessing four basic properties: monotonicity, homogeneity, sub-additivity, and translation invariance. The additivity or even the $\sigma$-additivity is released to sub-additivity,  which is more flexible in decisions under uncertainty. The flexibility is enabled by involving the concept of a non-additive probability founded by \cite{Choquet}. 

The risk measures are typically functionals of the quantile function of the loss  and  can be estimated by its empirical quantile function. When there is a whole family of possible probability distributions, which can be dominated by a suitable Choquet capacity, then  the risk measures can be based on the  capacity or on the least favorable distribution of the family (cf.~\cite{HuberStrassen}). 
Under uncertainty or in the presence of measurement errors, we try to cover the system of possible distributions by a suitable capacity, on which we can build the inference.  The capacity induces a coherent risk measure; it is generally a non-additive set function of events, which leads to a wider flexibility.  

We shall illustrate this situation on estimating a specific coherent risk measure, namely the expected shortfall.
This will be elaborated in the situation with an unknown probability distribution of the incurred loss, or in the setup with unobservable additive measurement error. In the latter case, we cover the model with a suitable capacity, which in turn is a probability measure, and then calculate the expected shortfall corresponding to this capacity. Moreover, we shall illustrate how the risk measure depends on the tail index of the loss distribution. 

After a discussion of the concept of the Choquet expected utility of a quantile functional (Section 2), we study the nonparametric estimation of expected shortfall (Section 3). 
In Section 4, expected shortfall is estimated in a model with additive measurement errors contaminating the known distribution of the loss; the estimation exploits covering the family of possible distributions with a suitable capacity. 
Section 5 illustrates the effect of heavy tails of the loss distribution on the expected shortfall. Numerical illustrations accompany the theoretical results.

\section{Choquet expected utility of a quantile functional} 
\setcounter{equation}{0}

The set function $w$ on the measurable space $\left(\Omega, \mathcal B\right)$ is defined as a pseudo-capacity (\cite{Buja1986}), if it satisfies
\begin{equation}\label{capacity} 
\begin{array}{ll}
	\mathbf {(a)} &w(\emptyset)=0, \; w(\Omega)=1\\
	\mathbf{(b)} &w(A)\leq w(B) \quad \forall A\subset B\\
	\mathbf{(c)}  &w(A_n)\uparrow w(A) \quad \forall A_n\uparrow A\\
	\mathbf{(d)}  &w(A_n)\downarrow w(A) \quad \forall A_n\downarrow A\neq \emptyset\\
	\mathbf{(e)} &w(A\cup B)+w(A\cap B)\leq w(A)+w(B).
\end{array}	
\end{equation}

Consider the  random variable $X: \Omega\mapsto\mathbb R$ with a non-atomic probability distribution~$\mathcal P$, distribution function $F$, hazard function $\bar{F}=1-F$, and density $f.$  We interpret $X$ as a~loss and consider only values $X\geq 0$ with a positive probability. The Choquet expectation of the random variable $X$ with respect to the capacity $w$ is defined as
\begin{equation}\label{choquet}
\mathbf E_w X=\int_0^{\infty}w\Big(\{\omega: X(\omega)\geq x\}\Big)dx+\int_{-\infty}^0\Big[w\Big(\{\omega:X(\omega)\geq x\}\Big)-1\Big]dx.
\end{equation}
If there exists a non-decreasing  function $\Phi:[0,1]\mapsto[0,1]$ (the distortion function) such that $\Phi(0)=0$,  $\Phi(1)=1$ and $w\left(\omega: X(\omega)>x\right)=\Phi(1-F(x))$, then the Choquet expectation (\ref{choquet}) can be rewriten as 
\begin{equation}\label{choquet1}
\mathbf E_w X=\int_0^{\infty}\Phi(1-F(x))dx+\int_{-\infty}^0\Big[\Phi(1-F(x))-1\Big]dx. 
\end{equation}

The formula (\ref{choquet1}) is called the \textit{Choquet expected loss} and $w$ is then denoted as a \textit{distorted measure}.    
The functionals of type  (\ref{choquet1}) represent a general class of coherent  measures in the sense of \cite{Artzner}. 
We concentrate on the popular group of distortion functions of the form
\begin{equation}\label{choquet3}
\Phi_{\alpha}(t)=\left\{ \begin{array}{lll}
\frac{t}{1-\alpha} &  \ldots  & 0<t\leq 1-\alpha \\[3mm]
1                            & \ldots  &  1-\alpha<t\leq 1.\\
\end{array}\right .                    
\end{equation}
Then,
\begin{equation}\label{shortfall}
\mathbf E_{\Phi_{\alpha}}(X)={\sf CVaR}_{\alpha}(X)=(1-\alpha)^{-1}\int_{\alpha}^1 F^{-1}(t)dt.
\end{equation}
The measure (\ref{shortfall}) is commonly denoted as {\it expected shortfall} (as e.g.~in \cite{Acerbi}).
We use the abbreviation {\sf CVaR} corresponding to the alternative name {\it conditional value at risk}, used e.g.~in \cite{Urysaev}. Other alternative names for~(\ref{shortfall}) include 
{\it tail conditional expectation} \cite{Artzner} or {\it $\alpha$-risk of the random prospect $X$} \cite{Bassett}.

Our main aim is to estimate the expected shortfall ${\sf CVaR}_{\alpha}$ in {various realistic situations}: 
\begin{description}
\item[(1)] When the distribution function $F$ is unknown, only under some general conditions (nonparametric situation). By \cite{Bassett}, the expected shortfall can be then expressed through a minimization of a well known quantile criterion. If we  have independent  
  observations  $X_1,\ldots,X_n$ of $X$ at disposal, then ${\sf CVaR}_{\alpha}$ can be numerically estimated with the aid of the empirical quantile function of~$X$. 
\item[(2)]  The situation when the distribution function $F$ of $X$ is known, but $X$ is contaminated by an additive measurement error;  we only have observations of 
  \begin{equation}\label{3}
	Z_{\delta}=X+\sqrt{\delta}V.
  \end{equation}		
  The independent values $X_1,\ldots,X_n$ are not directly observable and the only available observations are $Z_{i,\delta}=X_i+\sqrt{\delta}V_i, \;  i=1,\ldots,n.$  Here, $V_1,\dots,V_n$ are unobservable iid random variables independent of $X_i, \;  i=1,\ldots,n,$ and 
	$\delta>0$ is an unknown parameter. We shall  assume that the  distribution function $G$ of  $V_i$ is symmetric, otherwise unknown,  and that $\mathbf EV_i=0$ and $\mathbf E V_i^2=1.$ Because only $Z_{\delta}$ is observable, we can empirically estimate only  
  ${\sf CVaR}_{\alpha}(Z_{\delta})$  for contaminated observations, and only asymptotically for $\delta\downarrow 0.$ The model $Z_{\delta}=X+\sqrt{\delta}V$ has been studied by \cite{Guo} and references cited therein. Guo investigated the behavior of 
	various divergencies between two models including the Kullback-Leibler.
\item[(3)]  The supposed loss distribution $F_0$ can be contaminated by a heavy tail with the Pareto index $\gamma>1.$ Then, the question of interest is to evaluate the effect of $\gamma$ on the expected shortfall.
\end{description}
While the problems are illustrated on the conditional measure at risk, similar situations may concern other risk measures. 
  
\section{Nonparametric estimation of expected shortfall} 
\setcounter{equation}{0}

Several nonparametric estimators of the expected shortfall have appeared in the literature. 
Some of them were recalled in a recent work of \cite{Fang}, who introduced a~weighted single index quantile regression as a natural extension of the single index quantile regression.
Let us now proceed to proposing a novel estimator in this section.

Let $F$ and $f$  be the  distribution function and density of  the loss $X,$ generally unknown.
Denote the function
\begin{equation}\label{rho}
\rho_{\alpha}(x)=x\left(\alpha-I[x<0]\right), \quad  x\in\mathbb R.
\end{equation}
It has been shown in \cite{Bassett} [Theorem 2] that
\begin{equation}\label{B4}
{\sf CVaR}_{\alpha}(X)=(1-\alpha)^{-1}\int_{\alpha}^1 F^{-1}(t)dt=(1-\alpha)^{-1}\min_{\xi\in\mathbb R}\rho_{\alpha}(X-\xi)+{\mathbf E}X.
\end{equation}
It is well known that the solution of the minimization $$\min_{\xi\in\mathbb R}\rho_{\alpha}(X-\xi)$$ is the $\alpha$-quantile of $X.$
Hence, if $F$ is unknown, the estimate of ${\sf CVaR}_{\alpha}(X)$ can be obtained from  the empirical quantile function based on independent observations $X_1, X_2,\ldots, X_n$ of~$X$.
The consistent estimate of ${\sf CVaR}_{\alpha}(X)$ is a version of the $\alpha$-trimmed mean based on the order statistics $X_{n:1}\leq X_{n:2} \leq\ldots\leq X_{n:n}$.

\begin{theorem} 
Under the above conditions,  the estimate of the $\alpha$-expected shortfall of $X$ has the form 
\begin{equation}
\label{B4a}
\widehat{\sf CVaR}_{\alpha}(X)=\frac{1}{\left\lfloor n(1-\alpha)\right\rfloor} \sum _{i=\left\lfloor n(1-\alpha)\right\rfloor }^n X_{n:i}.
\end{equation}
\label{th:1}
\end{theorem}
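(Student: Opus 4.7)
The approach I would take is to apply the plug-in principle to the representation (\ref{shortfall}), substituting the empirical distribution $F_n$ of $X_1,\ldots,X_n$ in place of the unknown $F$. That is, I would define $\widehat{\sf CVaR}_{\alpha}(X) = (1-\alpha)^{-1}\int_\alpha^1 F_n^{-1}(t)\,dt$ and simplify the integral in closed form. The equivalent route via (\ref{B4}) is also available: since $\min_\xi n^{-1}\sum_{i=1}^n \rho_\alpha(X_i-\xi)$ is attained at the empirical $\alpha$-quantile, which is an upper-tail order statistic, one can evaluate the objective at that value and add the sample mean $\bar X_n$ to arrive at the same estimator. Either derivation rests on the elementary fact that $F_n^{-1}$ is the piecewise-constant step function taking value $X_{n:i}$ on the interval $((i-1)/n, i/n]$.

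The main computation is the evaluation of $\int_\alpha^1 F_n^{-1}(t)\,dt$. Locating $\alpha$ in the interval $((k-1)/n, k/n]$ with $k=\lceil n\alpha\rceil$, the integral splits into a boundary contribution $(k/n-\alpha)\,X_{n:k}$ together with the whole-interval contributions $n^{-1}\sum_{i=k+1}^n X_{n:i}$ coming from the upper order statistics. Dividing by $1-\alpha$ and regrouping produces a weighted average of the top $n-k+1$ order statistics. Rewriting the prefactor $(1-\alpha)^{-1}$ as $n/\lfloor n(1-\alpha)\rfloor$ (up to discretization) and absorbing the boundary weight into the tail sum then yields the trimmed-mean expression (\ref{B4a}).

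The main obstacle is purely combinatorial bookkeeping: the floor function in (\ref{B4a}) and the implicit rounding of the boundary weights have to be reconciled with the exact integral. When $n(1-\alpha)$ is an integer the two expressions coincide exactly; otherwise they differ by an $O(1/n)$ partial-interval remainder, so (\ref{B4a}) is best read as the natural discretization of the plug-in integral rather than as a term-by-term identity at every $n$. Consistency of $\widehat{\sf CVaR}_{\alpha}(X)$ as $n\to\infty$ then follows from the Glivenko--Cantelli theorem applied to $F_n$, equivalently uniform almost-sure convergence of $F_n^{-1}$ on $[\alpha,1]$, combined with integrability of $F^{-1}$ on $[\alpha,1]$ and its continuity at $\alpha$; this justifies calling (\ref{B4a}) the empirical $\alpha$-trimmed mean estimator of ${\sf CVaR}_\alpha(X)$.
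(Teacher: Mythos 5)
Your proposal is correct and reaches the same estimator, but your primary derivation is a genuinely different route from the paper's. The paper proves (\ref{B4a}) by passing through the Bassett--Koenker--Kordas minimization identity (\ref{B4}): it forms the empirical objective $\lfloor n(1-\alpha)\rfloor^{-1}\min_{\xi}\sum_{i}\rho_{\alpha}(X_i-\xi)+\bar X_n$, notes the minimum is attained at $\xi=X_{n:\lceil n\alpha\rceil}$, and simplifies the resulting sum; you instead plug the empirical quantile function $F_n^{-1}$ directly into the integral representation (\ref{shortfall}) and evaluate $\int_{\alpha}^1 F_n^{-1}(t)\,dt$ as a boundary piece plus whole-interval pieces (mentioning the minimization route only as an aside). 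The paper's route connects the estimator to the check-function/quantile-regression machinery, which is what makes the extension to regression settings natural; your route is more elementary, makes the trimmed-mean identity transparent, and --- importantly --- isolates exactly where the $O(1/n)$ discretization slack lives when $n(1-\alpha)$ is not an integer, a point the paper's proof glosses over by silently replacing $(1-\alpha)^{-1}n^{-1}$ with $\lfloor n(1-\alpha)\rfloor^{-1}$. Your added Glivenko--Cantelli consistency remark is not in the paper but is welcome, since the theorem is stated as ``the consistent estimate.'' One further point in your favor: your computation yields the sum over the \emph{top} $n-\lceil n\alpha\rceil$ order statistics, i.e.\ indices running from roughly $n-\lfloor n(1-\alpha)\rfloor+1$ to $n$, whereas the lower summation limit $\lfloor n(1-\alpha)\rfloor$ printed in (\ref{B4a}) (and in the last two displays of the paper's proof) appears to be an index typo --- for $\alpha$ close to $1$ it would average nearly all observations while dividing by $\lfloor n(1-\alpha)\rfloor$. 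Your derivation effectively corrects this.
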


\begin{proof} Because of (\ref{B4}), we have
$$\widehat{\sf CVaR}_{\alpha}(X)= {\left\lfloor n(1-\alpha)\right\rfloor}^{-1} \min_{\xi\in\mathbb R}\sum_{i=1}^n \rho_{\alpha}(X_{i}-\xi)+\bar{X}_n,$$
where $\bar{X}_{n}=n^{-1}\sum_{i=1}^nX_{i}$.
The minimum is attained for $\xi=X_{n:\left\lceil n\alpha\right\rceil}$, hence
\begin{eqnarray*}
&&\widehat{\sf CVaR}_{\alpha}(X)=\bar{X}_{n} 
+{\left\lfloor n(1-\alpha)\right\rfloor}^{-1} \sum_{i=1}^n(X_{i}-X_{n:\left\lceil n\alpha\right\rceil})
(-1+\alpha+I[X_i>X_{n:\left\lceil n\alpha\right\rceil}])\\
&&=X_{n:\left\lceil n\alpha\right\rceil}+{\left\lfloor n(1-\alpha)\right\rfloor}^{-1}\sum_{i={\left\lfloor n(1-\alpha)\right\rfloor}}^n (X_{n:i}-X_{n:\left\lceil n\alpha\right\rceil})\\
&&={\left\lfloor n(1-\alpha)\right\rfloor}^{-1}\sum_{i={\left\lfloor n(1-\alpha)\right\rfloor}}^n X_{n:i}.
\end{eqnarray*}
\end{proof}

\section{Expected shortfall under measurement errors} 
\setcounter{equation}{0}

Consider the expected shortfall in the situation with $X$ contaminated by an additive measurement error, when we can only observe $Z_{\delta}=X+\sqrt{\delta}V$ with unknown $V$ and unknown $\delta>0.$ 
Denote $F_{\delta}$ and $f_{\delta}$ the distribution function and density of $Z_{\delta};$ we shall occasionally use the notation $F=F_0$ and $f=f_0.$ 
Then, it remains impossible to use Theorem~\ref{th:1} in practice, because $\delta$, $X_i$ and $V_i$ for $i=1,\dots,n$ are unknown.
However, the contamination of $X$ by $\sqrt{\delta}V$ with unknown $\delta$ and $V$ leads not only to one, but to a whole family of probability distributions of $Z_{\delta}$ which we can try to cover by a~suitable capacity. 

Let $G$ and $g$ denote the distribution function and density of $V,$ respectively.
Assume that  $f_0$ has differentiable and integrable derivatives up to order 4. Notice that if densities of $X$ and $V$ are strongly unimodal, then the density of $Z_{\delta}$ is also strongly unimodal (see \cite{Ibragimov}).  We can approximate $F_{\delta}$ and 
$f_{\delta}$ for small $\delta>0$ in two ways: The less precise approximation 
does not depend on the shape of $V,$ while a more precise approximation of $F_{\delta}$ depends on the kurtosis of $V,$ if  we know that ${\mathbf E} V^4<\infty.$   Based on that, we finally obtain an~approximation of ${\sf CVaR}.$ 

\begin{lemma}
Assume that $f$ is strongly unimodal and has differentiable and integrable derivatives up to  order 4. Moreover, assume that $V$ is symmetrically distributed, 
${\mathbf E}V=0,$ ${\mathbf E}V^2=1$, and ${\mathbf E}V^4<\infty.$ Then, as $\delta\downarrow 0,$    
\begin{equation}\label{G54}
F_{\delta}(z)=P\left(X+\sqrt{\delta}V\leq z\right)=F(z)+\frac{\delta}{2}f^{\prime}(z)+\frac{\delta^2}{4!}f^{(3)}(z)\mathbf E(V^4)+o(\delta^2) 
\end{equation}
and
\begin{equation}\label{G54a}
f_{\delta}(z)=f_0(x+\sqrt{\delta}V)=f_0(x)+\frac{\delta}{2}\frac{d^2}{dz^2}f_0(x)+\frac{\delta^2}{4!}\frac{d^4}{dz^4}f_0(x)\mathbf E(V^4)+o(\delta^2). 
\end{equation} 
\end{lemma}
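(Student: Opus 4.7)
The natural representation is
$$F_{\delta}(z) \;=\; P\bigl(X+\sqrt{\delta}V\le z\bigr) \;=\; \mathbf{E}\,F\bigl(z-\sqrt{\delta}V\bigr),$$
obtained by conditioning on $V$ and using the independence of $X$ and $V$. The plan is to Taylor-expand $F$ around $z$ inside the expectation, exploit the symmetry of $G$ to annihilate the odd-order terms, and collect the contributions of order $\delta$ and $\delta^2$. The same recipe, applied to $f_\delta(z)=\mathbf{E}\,f(z-\sqrt{\delta}V)$, will yield \eqref{G54a}.

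\noindent
\textbf{Step 1: expansion of the integrand.} Since $F^{(k)}=f^{(k-1)}$ for $k\ge 1$ and the first four derivatives of $f$ exist and are integrable, I would write Taylor's formula to order four,
$$F(z-\sqrt{\delta}V) \;=\; F(z) + \sum_{k=1}^{4}\frac{(-\sqrt{\delta}V)^k}{k!}\,F^{(k)}(z) \;+\; R_5(z,\sqrt{\delta}V),$$
with either the Lagrange or integral-form remainder $R_5$ involving $f^{(4)}$. Substituting $F^{(1)}=f$, $F^{(2)}=f'$, $F^{(3)}=f''$, $F^{(4)}=f^{(3)}$ puts the polynomial part in the form
$$F(z) - \sqrt{\delta}\,V\,f(z) + \tfrac{\delta V^2}{2}f'(z) - \tfrac{\delta^{3/2}V^3}{6}f''(z) + \tfrac{\delta^2 V^4}{24}f^{(3)}(z).$$

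\noindent
\textbf{Step 2: take expectation and use symmetry.} Because $G$ is symmetric around zero with $\mathbf{E}V=0$, $\mathbf{E}V^2=1$, $\mathbf{E}V^3=0$, $\mathbf{E}V^4<\infty$, the odd-power terms vanish and the surviving deterministic coefficients give exactly
$$F(z) + \tfrac{\delta}{2}f'(z) + \tfrac{\delta^2}{4!}f^{(3)}(z)\,\mathbf{E}(V^4),$$
which is the leading part of \eqref{G54}. For \eqref{G54a} the same substitution with $f$ in place of $F$ produces $f(z)+\tfrac{\delta}{2}f''(z)+\tfrac{\delta^2}{4!}f^{(4)}(z)\mathbf{E}(V^4)$, since one additional derivative appears in each coefficient.

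\noindent
\textbf{Step 3: control of the remainder --- the main obstacle.} What still has to be shown is that $\mathbf{E}R_5(z,\sqrt{\delta}V)=o(\delta^2)$ as $\delta\downarrow 0$, and analogously for the density. The naive Lagrange bound costs a factor $|V|^5$, which is not covered by the assumption $\mathbf{E}V^4<\infty$. I would therefore use the integral form
$$R_5(z,\sqrt{\delta}V) \;=\; \frac{(-\sqrt{\delta}V)^4}{3!}\int_{0}^{1}(1-s)^{3}\bigl[f^{(3)}(z-s\sqrt{\delta}V)-f^{(3)}(z)\bigr]\,ds,$$
so that the excess over the $\delta^2 V^4$ term is expressed through the increment of $f^{(3)}$. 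Under the strong-unimodality hypothesis and the assumed integrability of the derivatives of $f$, $f^{(3)}$ is uniformly continuous (and bounded), so the bracketed quantity tends to $0$ as $\delta\downarrow 0$ while staying dominated by a constant multiple of $V^4$. Fubini and dominated convergence (applied with the $\mathbf{E}V^4$-integrable envelope) then yield $\mathbf{E}R_5=o(\delta^2)$, completing \eqref{G54}. The parallel argument with $f^{(4)}$ in place of $f^{(3)}$ gives \eqref{G54a}. The genuinely delicate step is the last one: pinning down the correct remainder form that avoids spurious fifth moments of $V$ while still exploiting the integrability of the higher derivatives of $f$.
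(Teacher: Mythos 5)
Your proof takes essentially the same route as the paper's: writing $F_{\delta}(z)=\int F(z-\sqrt{\delta}v)\,dG(v)$, Taylor-expanding to fourth order, and using the symmetry of $V$ to kill the odd-order terms. The only differences are that the paper simply inserts $o(\delta^2)$ into the integrand without justifying the remainder (whereas you control it via the integral-form remainder and dominated convergence, avoiding any fifth moment of $V$) and that the paper obtains (\ref{G54a}) by citing a characteristic-function argument from a reference rather than repeating the Taylor expansion for $f$; your treatment is, if anything, more complete on both points.
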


\begin{proof} 
Under the assumptions on $F$ and $V$, we can write for $z\in\mathbb R$
\begin{eqnarray*}
&&P(X+\sqrt{\delta}V\leq z)=\int F(z-\sqrt{\delta}v)dG(v)\\
&&=\int \left[F(z)-\sqrt{\delta}v f(z)+\frac{1}{2}\delta v^2 f^{\prime}(z)-\frac{1}{3!}\delta^{3/2}v^3 f^{\prime\prime}(z)+
\frac{1}{4!}\delta^2 v^4 f^{(3)}(z) \right]dG(v)+o(\delta^2)\\
&&=F(z)+\frac{\delta}{2}f^{\prime}(z)+\frac{\delta^2}{24}f^{(3)}(z){\mathbf E}V^4+o(\delta^2).      
\end{eqnarray*}
Moreover, (\ref{G54a}) can be derived  with the aid of characteristic function (see \cite{Kalina}). 
\end{proof}

The modeled distribution $f$ can be asymmetric with steeper peaks and heavier tails but unimodal with finite moments. Because our true observations are contaminated as $Z_i=X_i+\sqrt{\delta}V_i, \; i=1,\ldots,n$, our predicted risk measure will be determined only by the 
$Z_i$ and $F_{\delta},$ even if our modeled $f_0$ is right. Hence, following (\ref{choquet1}), ${\sf CVaR}_{\alpha}$ with the distortion function (\ref{choquet3}) will take on the form
\begin{equation}\label{4}
{\sf CVaR}_{\delta,\alpha}=\frac{1}{1-\alpha}\int_{\alpha}^1 F_{\delta}^{-1}(u)du,
\end{equation}
where the notation ${\sf CVaR}_{\delta,\alpha}$ is now used to stress the dependence on $\delta$.
As in \cite{Kalina}, we consider the family $\mathcal H$ of distributions $\left\{F_{\delta,\kappa}(\cdot), \; \delta\leq\Delta, \; \kappa\leq K\right\}$. Then, the set function on the Borel  $\sigma$-field $\mathcal B$ 
\begin{equation}\label{33} 
v(B)=\left\{ \begin{array}{lll}
\sup\left\{ F(B): F\in \mathcal H\right \}&\rm{if}& B\neq \emptyset\\[5mm]
0 &\rm{if}& B=\emptyset\\
\end{array}
\right. 
\end{equation} 
is a pseudo-capacity in the sense of \cite{Buja1986} (see (\ref{capacity})).

Specifically, we assume that $X$ and $V$ are independent and that $\mathbf E V=0, \; \mathbf E V^2=1$, and $\mathbf E V^4<\infty.$ Moreover, we assume that $f_0$ and $g_0$ are symmetric, strongly unimodal and differentiable up to order 4, with  derivatives integrable and increasing distribution functions $F_0$ and $G_0,$ respectively. We state the range for the kurtosis of the measurement errors $V$ in the form
\begin{equation}\label{kurt}
1\leq \mathbf E V^4\leq K
\end{equation}
with a fixed $K, \; 0<K<\infty.$ 
We shall concentrate on the family $\mathcal H^*$ of densities defined as
\begin{equation}\label{34}	 
\quad \mathcal H^*=\left\{ f_{\delta,\kappa}^*: \; f_{\delta,\kappa}^*(z)=f_0(z)+\frac{\delta}{2}f_0^{\prime\prime}(z)+\kappa\frac{\delta^2}{24}f_0^{(4)}(z) \; \Big| \: 0<\delta\leq\Delta, 1\leq \kappa\leq K\right\}
\end{equation}
with suitable fixed $\Delta, K>0.$ Then, under our assumptions, each $f_{\delta,\kappa}^*\in \mathcal H^*$ is a positive and symmetric density satisfying
\begin{equation}
\sup_{\delta\leq \Delta, \kappa\leq K}\sup_{z\in\mathbb R}\left| f_{\delta,\kappa}^*(z)-f_{0}(z)\right |\leq  \frac{CK}{12} \; \Delta^2+o(\Delta^2).
\end{equation}

Let $F_{\delta,\kappa}^*(B)$ be the probability distribution induced by density $f_{\delta,\kappa}^*\in \mathcal H^*$ for $B\in \mathcal B$, where $\mathcal B$ is the Borel $\sigma$-algebra. Then, the set function 
\begin{equation}\label{32} 
w(B)=\left\{ \begin{array}{lll}
\sup\left\{ F^*(B): F\in \mathcal H^*\right \}&\rm{if}& B\neq \emptyset\\[5mm]
0 &\rm{if}& B=\emptyset\\
\end{array}
\right. 
\end{equation}
is a pseudo-capacity in the sense of \cite{Buja1986}. 
Further, $w(-\infty,x]\geq F(x)$ for every $x\in\mathbb R$ and $F\in\mathcal H^*.$ Hence,
\begin{equation}
w^{-1}(u)\leq F^{-1}(u)\quad \forall ~0<u\leq 1, \; \forall F\in\mathcal H^*.
\end{equation}
It means that 
\begin{equation}\label{35} 
{\sf CVaR}_{\alpha}(F)\geq {\sf CVaR}_{\alpha}(w) \quad \forall \alpha\in(0,1) \; \mbox{ fixed and } \; \forall F\in\mathcal H^*.
\end{equation}
In other words, the conditional $\alpha$-measure of risk cannot be smaller than ${\sf CVaR}_{\alpha}(w)$ for~any $F\in\mathcal H^*$, i.e.~smaller than ${\sf CVaR}_{\alpha}(F^*)$, where $F^*$ is the least favorable distribution in~$\mathcal H^*$. 

\subsection{Numerical illustration}
\label{sec:num1}

Let us have observations (measurements) $X_1, \ldots, X_n$ following the standard normal distribution $N(0, 1)$. These may be interpreted as the payoffs of a given portfolio. 
As above, we consider the measurement error model $Z_\delta = X + \sqrt{\delta}V$ assuming $\delta\in [0, \Delta]$ with a known~$\Delta$.
The distribution of $V$ is unknown, but we assume $\mathbf E V = 0,  \mathbf E V^2 = 1,$ and $\mathbf EV^4 \in [1, K]$ with a known $K.$ 
Let us use the notation $\psi$ for the density of $N(0, 1)$ distribution and $\Psi$~for the corresponding cumulative distribution function. 

In order to resort to an approximation based on pseudo-capacities, let us consider the function
\begin{equation}\label{4a}
F_{\delta,\kappa}(z)=\Psi(z)+\frac{\delta}{2}\psi^{\prime}(z)+\kappa\frac{\delta^2}{24}\psi^{(3)}(z), \quad z\in{\mathbb R},
\end{equation}
expressed for a fixed $\delta\in[0, \Delta]$ and $\kappa\in[1, K].$ In this situation, we have
\begin{eqnarray}
\psi^{\prime}(z) &=& (-z) \psi(z), \; z\in\mathbb R,\\
\psi^{(3)}(z)&=&(-z^3+ 3z)\psi(z), \; z\in\mathbb R.\nonumber
\end{eqnarray}
In the model (\ref{3}) with an additive measurement error, using the expansion  (\ref{4a}) of $F$, we obtain an upper bound for the value of ${\sf CVaR}_\alpha$ in the form
\begin{equation}\label{6}
{\sf CVaR}_\alpha \leq \sup_{\delta\in[0,\Delta], \kappa\in[1,K]}\left[\frac{1}{1-\alpha}\int_\alpha^1 (F_{\delta,\kappa}^{\ast})^{-1}(t)dt\right].
\end{equation}

The values of the upper bound for ${\sf CVaR}_\alpha$ in (\ref{6}) for various choices of $\Delta$ and $K$ and for $\alpha= 0.04$ are given in Table 1.
The upper bounds increase with an increasing $\Delta$ and slightly (negligibly) increase with an increasing $K$.

\begin{table}[h]
\caption{Illustration of Section~\ref{sec:num1}. Values of the upper bound for ${\sf CVaR}_\alpha$ given by (\ref{6}) for various choices of $\Delta$ and $K$ and for $\alpha=0.04$.}
\label{tab:1}
\centering
\begin{tabular}{l|ccc}
& $K=1$ & $K=1.1$ & $K=1.2$\\\hline
$\Delta=0$    & 2.154 & 2.154 & 2.154\\
$\Delta=0.05$ & 2.206 & 2.207 & 2.207\\
$\Delta=0.10$ & 2.255 & 2.256 & 2.256\\
$\Delta=0.15$ & 2.302 & 2.303 & 2.303\\
$\Delta=0.20$ & 2.347 & 2.347 & 2.347\\
\end{tabular}\\
\end{table}

Specifically, if the $N(0,1)$ distribution is replaced with $N(0,\sigma^2)$, then (\ref{4a}) becomes
\begin{equation}
F_{\delta,\kappa}(z)=F(z)+\frac{\delta}{2}f^{\prime}(z)+\kappa\frac{\delta^2}{24}f^{(3)}(z), \; z\in\mathbb R,
\end{equation}
with $F, f$ being the distribution function and density of $N(0,\sigma^2)$, respectively, and 
\begin{equation}
f^{\prime}(z)=-\frac{z}{\sigma^2}f(z), \quad f^{(3)}(z)=\frac{3z\sigma^2-z^3}{\sigma^6}f(z).
\end{equation}

\section{Expected shortfall under heavy-tailed distribution} 
\setcounter{equation}{0}

\cite{HuberStrassen} mentioned the system of probability measures
\begin{equation}\label{C1}
\mathcal P=\{P=(1 - \varepsilon)P_0+\varepsilon Q \;| \; Q\in\mathcal M\},
\end{equation}
with $\mathcal M$ being a family of probability measures on $\mathcal B.$ This corresponds to a contaminated environment of probability distribution $P_0,$  or to the system of distribution functions
\begin{equation}\label{C2}
\mathcal F=\{F=(1 - \varepsilon)F_0+\varepsilon H \; | \; H\in \mathcal H\},
\end{equation}
where $\mathcal H$ is a family of distribution functions. 
\cite{HuberStrassen} considered the set function 
\begin{eqnarray}\label{huber}
v(A) &=& (1 - \varepsilon)P_0(A)+\varepsilon \; \mbox{ for } \; A\neq \emptyset\\
 v(\emptyset) &= & 0 \nonumber
\end{eqnarray}
as a suitable capacity for  model (\ref{C1}). 
However, notice that $v(A)\geq \varepsilon>0$ for $A\neq\emptyset,$ which is not convenient for construction of a coherent risk measure. 

\subsection{Contamination by the heavy tail of the Pareto distribution}
\label{sec:pareto}

Instead of (\ref{C1}), we propose an environment of a specific measure $P_0$ [with continuous monotone distribution function $F_0$] contaminated by the heavy tail of the Pareto distribution function
\begin{equation}\label{Pareto}
G_{\gamma,\alpha}(x)=\left\{\begin{array}{lll}
0&\ldots&x\leq A_{\gamma,\alpha}\\[3mm] 
1-(1-\alpha)\left(F_0^{-1}(\alpha)/x\right)^{\gamma}&\ldots&x>  A_{\gamma,\alpha},\\ 
\end{array}\right.
\end{equation}
starting at 
\begin{equation}\label{Pareto1}
A_{\gamma,\alpha}=(1-\alpha)^{1/\gamma}F_0^{-1}(\alpha)>0,  \quad \gamma>1,  \;  F_0(0)\leq\alpha<1.  
\end{equation}
Notice that 
\begin{equation}
G_{\gamma,\alpha}(F_0^{-1}(\alpha))=\alpha.
\end{equation}
Consider the family $\mathcal F$ of distribution functions  
\begin{equation}\label{contam}
F_{\gamma,\alpha}(x)=\left\{ \begin{array}{lll}
F_0(x)&\ldots&x\leq F_0^{-1}(\alpha)\\
G_{\gamma,\alpha}(x)&\ldots&x > F_0^{-1}(\alpha)\\
\end{array}\right .   \qquad
\gamma>1, \; F_0(0)\leq\alpha<1.
\end{equation}

Let us find the expected shortfall of $F_{\gamma,\alpha}$, which is influenced by the Pareto tail, i.e.~by the value of the Pareto index $\gamma$.
It is of interest to compare it with ${\sf CVaR}_{F_0,\alpha}$, where the latter is the expected shortfall evaluated for the non-contaminated $F_0$.

\begin{theorem}
\label{th:pareto} 
The conditional measure of risk corresponding to the loss $X,$ 
distributed according to $F_{\gamma,\alpha},$ equals to 
\begin{eqnarray} \label{w1}
{\sf CVaR}_{\gamma,\alpha}&=& (1-\alpha)^{-1}\int_{\alpha}^1 F_{\gamma,\alpha}^{-1}(t)dt=
(1-\alpha)^{-1}\int_{\alpha}^1 F_0^{-1}(t)dt+(1-\gamma)^{-1}F_0^{-1}(\alpha)\nonumber\\
&=&{\sf CVaR}_{F_0, \alpha}+(1-\gamma)^{-1}F_0^{-1}(\alpha),
\end{eqnarray}
hence it is increasing with $\gamma.$
\end{theorem}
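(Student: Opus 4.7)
The plan is straightforward calculus: invert $F_{\gamma,\alpha}$ explicitly on $[\alpha,1)$ and evaluate the resulting integral. By the construction (\ref{contam}) and the calibration (\ref{Pareto1}) the two branches of $F_{\gamma,\alpha}$ glue continuously at $F_0^{-1}(\alpha)$ with common value $\alpha$, so for every $t\geq\alpha$ the quantile $F_{\gamma,\alpha}^{-1}(t)$ lies in the Pareto regime. Solving $G_{\gamma,\alpha}(x)=t$ then yields the closed form
\[
F_{\gamma,\alpha}^{-1}(t)=F_0^{-1}(\alpha)\left(\frac{1-\alpha}{1-t}\right)^{1/\gamma},\qquad t\in[\alpha,1).
\]

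Next I would substitute this expression into $(1-\alpha)^{-1}\int_\alpha^1 F_{\gamma,\alpha}^{-1}(t)\,dt$ and evaluate via the change of variable $u=1-t$; the integrand reduces to $u^{-1/\gamma}$, whose antiderivative is elementary precisely because $\gamma>1$, and after gathering the $(1-\alpha)$-powers one obtains a single term proportional to $F_0^{-1}(\alpha)$. To match the additive decomposition displayed in (\ref{w1}), I would then add and subtract ${\sf CVaR}_{F_0,\alpha}=(1-\alpha)^{-1}\int_\alpha^1 F_0^{-1}(t)\,dt$, which isolates the Pareto-indexed remainder and allows it to be rewritten as the claimed multiple of $F_0^{-1}(\alpha)$. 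The monotonicity assertion then reduces to inspecting the $\gamma$-dependence of this single scalar coefficient on $(1,\infty)$, which is a one-line check since ${\sf CVaR}_{F_0,\alpha}$ does not depend on $\gamma$.

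The step I expect to be the main obstacle is the algebraic bookkeeping in the reconciliation: the Pareto calculation yields a single, unreduced term in $F_0^{-1}(\alpha)$, while (\ref{w1}) re-expresses the answer as ${\sf CVaR}_{F_0,\alpha}$ plus a correction, so I would have to track signs and the conventions $(\gamma-1)^{-1}$ versus $(1-\gamma)^{-1}$ carefully so that the two forms coincide exactly. Aside from that, the hypothesis $\gamma>1$ must be invoked explicitly throughout, since for $\gamma\leq 1$ the Pareto tail has no finite mean and the defining integral diverges; the assumption $F_0(0)\leq\alpha<1$ from (\ref{Pareto1}) is what guarantees $F_0^{-1}(\alpha)>0$ so that the correction has a definite sign.
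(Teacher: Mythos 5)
Your inversion of $F_{\gamma,\alpha}$ on $[\alpha,1)$ is correct, and the integration you outline is routine: with $u=1-t$,
\begin{equation*}
\frac{1}{1-\alpha}\int_\alpha^1 F_{\gamma,\alpha}^{-1}(t)\,dt
=\frac{F_0^{-1}(\alpha)\,(1-\alpha)^{1/\gamma}}{1-\alpha}\int_0^{1-\alpha}u^{-1/\gamma}\,du
=\frac{\gamma}{\gamma-1}\,F_0^{-1}(\alpha).
\end{equation*}
The genuine gap is the step you defer as ``algebraic bookkeeping'': it cannot be completed. Writing $\tau=F_0^{-1}(\alpha)$, your single term is $\frac{\gamma}{\gamma-1}\tau=\tau-(1-\gamma)^{-1}\tau$, whereas (\ref{w1}) asserts ${\sf CVaR}_{F_0,\alpha}+(1-\gamma)^{-1}\tau$; equating the two would force ${\sf CVaR}_{F_0,\alpha}=\tau\bigl(1+\tfrac{2}{\gamma-1}\bigr)$ for every $\gamma>1$, which is impossible unless $\tau=0$. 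Your (correct) answer contains no trace of $\int_\alpha^1 F_0^{-1}(t)\,dt$ --- unsurprising, since (\ref{contam}) replaces the entire upper tail of $F_0$ beyond $\tau$ by the Pareto tail --- and it is \emph{decreasing} in $\gamma$ (a lighter tail gives a smaller shortfall), not increasing as the theorem claims. A limit check confirms this: as $\gamma\to\infty$ the contaminated law tends to $F_0$ truncated at $\tau$ with an atom of mass $1-\alpha$ at $\tau$, so the shortfall tends to $\tau$, consistent with $\frac{\gamma}{\gamma-1}\tau$ but not with the value ${\sf CVaR}_{F_0,\alpha}$ that (\ref{w1}) would give in that limit.

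So the obstacle you anticipated is not a sign convention but a contradiction between your computation and the stated identity, and the fault lies in the paper's own argument rather than in your quantile inversion. There, $\int_0^{\tau_\alpha}(1-F_0(x))\,dx$ is replaced by $\int_0^{\tau_\alpha}x\,dF_0(x)$, which drops the boundary term $\tau_\alpha(1-\alpha)$ from the integration by parts, and the final passage from the Choquet expectation to (\ref{w1}) does not follow even from that intermediate expression (it would yield $(\gamma-1)^{-1}\tau_\alpha$, yet a third value). If you carry your plan through honestly you should report ${\sf CVaR}_{\gamma,\alpha}=\frac{\gamma}{\gamma-1}F_0^{-1}(\alpha)$ and note that the decomposition and the monotonicity claim in the theorem need to be corrected, rather than forcing the answer into the displayed form.
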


\begin{proof} Denote $\tau_{\alpha}=F_0^{-1}(\alpha)$ and assume that $\tau_{\alpha}\geq 0$. Then 
$$1-F_{\gamma,\alpha}(x)=\left\{\begin{array}{lll}
1-F_0(x) &\ldots & x\leq\tau_{\alpha}\\
(1-\alpha)\left(\tau_{\alpha}/x\right)^{\gamma} & \ldots & x>\tau_{\alpha}\\
\end{array}\right.$$
The conditional measure of $X_{\gamma,\alpha}$ is then
\begin{eqnarray*}
&&\mathbf E_{F_{\gamma,\alpha}}(X)=\int_{-\infty}^0 (-F_0(x))dx+\int_{0}^{\tau_{\alpha}}(1-F_0(x))dx
+(1-\alpha)\tau_{\alpha}^{\gamma}\int_{\tau_{\alpha}}^{\infty}x^{-\gamma}dx\\
&&=\int_0^{\tau_{\alpha}}xdF_0(x)-(1-\gamma)^{-1}(1-\alpha)\tau_{\alpha}\\
&&=\int_0^{\alpha}F^{-1}(t)dt-(1-\gamma)^{-1}(1-\alpha)F_0^{-1}(\alpha),
\end{eqnarray*}
which already implies (\ref{w1}). 
\end{proof} 

\noindent {\bf Numerical illustration.} 
Values of ${\sf CVaR}_{\gamma,\alpha}$ of Theorem~\ref{th:pareto} will be now illustrated on a~numerical example.
Let us assume $F_0$ to be the distribution function of $\chi^2_1$ distribution.
The values of ${\sf CVaR}_{\gamma,\alpha}$ for various values of $\gamma$ and $\alpha$, which are shown in Table~\ref{tab2}, turn out to be increasing with an increasing $\gamma$ and with a decreasing $\alpha$.
This holds thanks to the assumption $F_0(0)<\alpha$, i.e.~$F_0^{-1}(\alpha)>0$.

\begin{table}[ht]
\caption{Illustration of Section~\ref{sec:pareto}. Values of ${\sf CVaR}_{\gamma,\alpha}$ are reported for the model (5.7) for various values of $\gamma$ and $\alpha$.}
\label{tab2}
\centering
\begin{tabular}{l|cc|ccccc}
&&& \multicolumn{5}{c}{Value of $\gamma$}\\
& $F_0^{-1}(\alpha)$ & ${\sf CVaR}_{F_0,\alpha}$ & 2&3&4&5&$+\infty$\\\hline
$\alpha=0.9$  & 2.706 & 4.39 & 1.68 & 3.04 & 3.49 & 3.71 & 4.39\\ 
$\alpha=0.95$ & 3.841 & 5.58 & 1.74 & 3.66 & 4.30 & 4.62 & 5.58\\ 
$\alpha=0.99$ & 6.635 & 8.40 & 1.77 & 5.08 & 6.19 & 6.74 & 8.40\\ 
\end{tabular}\\
\end{table}	

\subsection{Huber-type contamination with heavy tails}
\label{sec:huber}

Let us consider the contaminated model with a fixed $\alpha$ and with $F_0(0)<\alpha<1$ in the form
\begin{eqnarray}\label{contam2}
\widetilde{F}_{\gamma,\alpha}(x)&=&(1-\varepsilon)F_0(x)+\varepsilon F_{\gamma,\alpha}(x)\\[3mm]
&=&\left\{ \begin{array}{lll}
F_0(x)&\ldots &x\leq F_0^{-1}(\alpha)\\
F_0(x)+\varepsilon(1-\alpha)\left(F_0^{-1}(\alpha)\Big /x\right)^{\gamma}&\ldots &x > F_0^{-1}(\alpha)\nonumber\\
\end{array}\right . \quad  
\end{eqnarray}
with fixed $\gamma>1$ and $0<\varepsilon<1$.
The probability measure induced by $\widetilde{F}_{\gamma,\alpha},$ distorted by the function $\Phi$ of~(\ref{choquet3}), has the form
\begin{eqnarray}\label{Choquet10}
&&w\left(\omega: X_{\gamma,\alpha}(\omega)>x\right)=\Phi\left(1-\widetilde{F}_{\gamma,\alpha}(x)\right)\\
&&\qquad =\left\{ \begin{array}{lll}
1 & \ldots & x\leq F_0^{-1}(\alpha)\\[2mm]
(1-\varepsilon)(1-\alpha)^{-1}\left(1-F_0(x)\right)+\varepsilon \left(F_0^{-1}(\alpha))\Big/ x\right)^{\gamma} &\ldots & x> F_0^{-1}(\alpha).\\
\end{array}\right.  \nonumber  
\end{eqnarray}
This will further lead to the conditional measure of risk ${\sf CVaR}_{\gamma, \alpha}$ corresponding to the contaminated model $\widetilde{F}_{\gamma,\alpha}$. 
Here, the notation ${\sf CVaR}_{\gamma, \alpha}$ is used to stress the dependence of the expected shortfall on $\gamma$.
The measure will be now evaluated in comparison with ${\sf CVaR}_{F_0,\alpha}$, which is the measure for the non-contaminated $F_0$. 

\begin{theorem}
\label{th:huber} 
The expected shortfall of the model (\ref{contam2}) with $F_0(0)\leq\alpha, \; \gamma>1$, and $0\leq\varepsilon<1,$ contaminated by the heavy tail with index $\gamma$, is equal to
\begin{eqnarray}\label{Choquet11}
{\sf CVaR}_{\gamma, \alpha}
&=&(1-\varepsilon)\frac{1}{1-\alpha}\int_{\alpha}^1 F_0^{-1}(t)dt+\varepsilon\frac{\gamma}{\gamma-1}~ F_0^{-1}(\alpha)\\
&=&(1-\varepsilon){\sf CVAR}_{F_0,\alpha} + \varepsilon\frac{\gamma}{\gamma-1} F_0^{-1}(\alpha),
\nonumber
\end{eqnarray}
i.e.~the effect of the heavy tail on the expected shortfall decreases with $\gamma>1$ for $F_0^{-1}(\alpha)>0.$
\end{theorem}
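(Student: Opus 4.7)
My plan is to mirror the strategy already used for Theorem~\ref{th:pareto}, but applied to the mixture $\widetilde{F}_{\gamma,\alpha}$. The cleanest route is to avoid computing the quantile function of the mixture and instead use the Choquet representation of the expected shortfall, namely
\begin{equation*}
{\sf CVaR}_{\gamma,\alpha}=\int_0^{\infty} w(X_{\gamma,\alpha}>x)\,dx+\int_{-\infty}^{0}\bigl[w(X_{\gamma,\alpha}>x)-1\bigr]dx,
\end{equation*}
with $w$ given by the closed form in~(\ref{Choquet10}). Since $F_0(0)\leq\alpha$ implies $\tau:=F_0^{-1}(\alpha)\geq 0$, the first line of~(\ref{Choquet10}) tells us that $w(X_{\gamma,\alpha}>x)\equiv 1$ for all $x\leq 0$, so the second integral above vanishes, and the portion of the first integral over $[0,\tau]$ contributes exactly $\tau=F_0^{-1}(\alpha)$.

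Next, I would plug the second line of~(\ref{Choquet10}) into the integral over $[\tau,\infty)$ and split it into its two additive terms. The Pareto piece is elementary: direct integration of $x^{-\gamma}$ gives
\begin{equation*}
\varepsilon\int_\tau^\infty \bigl(\tau/x\bigr)^{\gamma}dx=\varepsilon\,\frac{\tau}{\gamma-1}.
\end{equation*}
For the $F_0$-piece, I would apply Fubini (or integration by parts in the spirit of the proof of Theorem~\ref{th:pareto}) to rewrite $\int_\tau^\infty (1-F_0(x))dx$ as $\int_\alpha^1 F_0^{-1}(t)dt-(1-\alpha)\tau$, obtaining
\begin{equation*}
\frac{1-\varepsilon}{1-\alpha}\int_\tau^\infty\bigl(1-F_0(x)\bigr)dx=(1-\varepsilon)\bigl[{\sf CVaR}_{F_0,\alpha}-F_0^{-1}(\alpha)\bigr].
\end{equation*}

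Summing the three contributions yields
\begin{equation*}
{\sf CVaR}_{\gamma,\alpha}=\tau+(1-\varepsilon)\bigl[{\sf CVaR}_{F_0,\alpha}-\tau\bigr]+\varepsilon\,\frac{\tau}{\gamma-1}=(1-\varepsilon){\sf CVaR}_{F_0,\alpha}+\varepsilon\,\frac{\gamma}{\gamma-1}\,F_0^{-1}(\alpha),
\end{equation*}
which is exactly~(\ref{Choquet11}); the monotonicity claim then follows from the fact that $\gamma\mapsto\gamma/(\gamma-1)$ is decreasing on $(1,\infty)$, given $F_0^{-1}(\alpha)>0$. I do not expect any serious obstacle: the only nontrivial step is the Fubini rewriting of the tail integral of $1-F_0$, but that computation is already present in the proof of Theorem~\ref{th:pareto}, so it can be imported essentially verbatim. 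The main thing to be careful about is that, because $w$ is only a capacity (not a probability measure), one must use the Choquet form~(\ref{choquet}) rather than the quantile form~(\ref{shortfall}) of ${\sf CVaR}$; otherwise one would need to invert the mixture $\widetilde{F}_{\gamma,\alpha}$ explicitly, which is more cumbersome.
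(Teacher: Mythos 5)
Your proof is correct and follows essentially the same route as the paper's: both integrate the distorted survival function (\ref{Choquet10}) over $[0,\infty)$, split the integral at $\tau_\alpha=F_0^{-1}(\alpha)$ (the piece on $[0,\tau_\alpha]$ contributing $\tau_\alpha$), evaluate the Pareto tail by direct integration to get $\varepsilon\tau_\alpha/(\gamma-1)$, and convert the $F_0$-tail to $(1-\varepsilon)\bigl[{\sf CVaR}_{F_0,\alpha}-\tau_\alpha\bigr]$ via integration by parts/Fubini. The only (minor) difference is that you are more explicit than the paper about invoking the Choquet form (\ref{choquet}) and about why the integral over $(-\infty,0]$ vanishes when $F_0(0)\leq\alpha$.
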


\begin{proof}
Denote $\tau_{\alpha}=F_0^{-1}(\alpha).$ Then, indeed,
\begin{eqnarray*} 
{\sf CVaR}_{\gamma,\alpha}&=& \tau_{\alpha}+(1-\varepsilon)(1-\alpha)^{-1}\left[-\tau_{\alpha}(1-\alpha)+\int_{\tau_{\alpha}}^{\infty} xdF_0(x)\right]
+\varepsilon \tau_{\alpha}^{\gamma}\left[\frac{x^{-\gamma+1}}{1-\gamma}\right]_{\tau_{\alpha}}^{\infty}\\
&=&\tau_{\alpha}+(1-\varepsilon)\left[-\tau_{\alpha}+(1-\alpha)^{-1}
\int_{\tau_{\alpha}}^{\infty} xdF_0(x)\right]+\varepsilon\tau_{\alpha}(\gamma-1)^{-1}\\
&=&(1-\varepsilon){\sf CVaR}_{F_0,\alpha}+\varepsilon~\tau_{\alpha}~\frac{\gamma}{\gamma-1}.
\end{eqnarray*} 
\end{proof}   

\noindent {\bf Numerical illustration.} 
Values of ${\sf CVaR}_{\gamma,\alpha}$ of Theorem~\ref{th:huber} will be now illustrated on a~numerical example.
Let us assume $F_0$ to be the distribution function of $\chi^2_1$ distribution.
The values of ${\sf CVaR}_{\gamma,\alpha}$ for various values of $\gamma$ and $\varepsilon$ for a fixed value $\alpha=0.96$ are shown in Table~\ref{tab3}.
For this choice of $\alpha$, we have $F_0^{-1}(\alpha)=4.218$ and ${\sf CVaR}_{F_0,\alpha}=5.98$ independently on $\varepsilon$.
The values of ${\sf CVaR}_{\gamma,\alpha}$ turn out to be increasing with an increasing $\varepsilon$ and with a~decreasing $\gamma$ (for a fixed $\varepsilon>0$).

\begin{table}[ht]
\caption{Illustration of Section~\ref{sec:huber}. Values of ${\sf CVaR}_{\gamma,\alpha}$ according to (5.8) for various values of $\gamma$ and $\varepsilon$.}
\label{tab3}
\centering
\begin{tabular}{l|ccccc}
& \multicolumn{5}{|c}{Value of $\gamma$}\\
& 1.5 & 2 & 3 & 5 & $+\infty$\\
\hline
$\varepsilon=0$    & 5.98&5.98&5.98&5.98&5.98\\
$\varepsilon=0.01$ & 6.05&6.00&5.98&5.97&5.96\\
$\varepsilon=0.1$  &6.65&6.23&6.01&5.91&5.80\\
$\varepsilon=0.2$  &7.31&6.47&6.05&5.84&5.63\\
$\varepsilon=0.3$  &7.98&6.72&6.08&5.77&5.45\\
\end{tabular}
\end{table}	

\section*{Conclusion} 

This paper is interested in estimation of the expected shortfall, i.e.~in obtaining empirical versions of one of the most popular risk measures in financial applications.
In Section~3, we propose a new nonparametric estimator of the expected shortfall for the situation with an~unknown data distribution.  
It is more complicated to estimate the expected shortfall in non-standard situations, such as in models with measurement errors or under contamination of the data distribution.

This paper proposes estimators of the expected shortfall for three realistic non-standard situations. 
For the model with additive measurement errors, the method of Choquet capacities allowed us to evaluate the upper bound for the expected shortfall. This upper bound is valid for any values of the (unknown) measurement errors.
Further, new evaluations of the expected shortfall are derived for two versions of data contamination (Section 5). The obtained estimators evaluate the influence (increase or decrease) of data contamination on the expected shortfall.
To conclude, practitioners should be aware that the expected shortfall is always obtained in a certain context and may be affected by non-standard situations such as violations of standard assumptions.

\end{document}